\def\xnewpage{} 
\newtheorem{Theorem}{Theorem}
\newtheorem{Lemma}{Lemma}
\def\ra{\rightarrow}
\def\C{\mathcal{C}}
\def\R{\mathbb{R}}
\def\RTX{\R_+^{1+3}}
\def\RTR{\R_+^{1+1}}
\def\Linf{{L^\infty}}
\def\Linftx#1{{L^\infty_{#1}}} 
\def\Linftr#1{{\widehat{L}^\infty_{#1}}} 
\def\LinfTX{{L^\infty(\RTX)}}
\def\<{\langle}
\def\>{\rangle}
\def\norm#1{\<#1\>}
\def\d{\partial}
\def\eps{\varepsilon}
\def\la{\lambda}
\def\L{\Box^{-1}}
\def\O{O}
\newcommand{\OO}[2][]{O_{#1}\left(#2\right)}
\begin{document}

\title{Asymptotics from scaling for nonlinear wave equations}

\author{Nikodem Szpak}
\affiliation{Max-Planck-Institut f\"{u}r
Gravitationsphysik, Albert-Einstein-Institut, Golm, Germany}
\date{\today}

\begin{abstract}
  We present a scaling technique which transforms the evolution problem for a nonlinear wave equation with small initial data to a linear wave equation with a distributional source.
The exact solution of the latter uniformly approximates the late-time behavior of solutions of the nonlinear problem in timelike and null directions.
\end{abstract}


\maketitle

\section{Introduction}

For quite some time it is known that small global solutions of nonlinear wave equations
\begin{equation} \label{wave-eq}
   \Box u = F(u)
\end{equation}
develop late-time power-law tails. Several authors \cite{Asakura, John-blowup, Strauss-T, NS-Tails} have given pointwise decay estimates which imply bounds on the decay rates as $t\ra\infty$. However, they do not  show that these decay rates are optimal. It was only recently that we were able to calculate the exact decay rates and gave conditions on the initial data under which the decay estimates are optimal \cite{NS-Tails, NS-PB_Tails}. Our proof was, however, restricted to the asymptotics in the timelike direction, i.e. $t\ra\infty$ with $x$ being fixed. Here, we give a uniform asymptotic formula which holds in the timelike and null infinity, i.e. for $t\ra\infty$ and $0<|x|/t<1$.

We consider the initial value problem for nonlinear wave equations of the form \eqref{wave-eq} with small initial data
\begin{equation} \label{init-data}
  (u, \d_t u)|_{t=0} = (\eps f,\eps g)
\end{equation}
in three spatial dimensions, 
i.e. $u:(t,x)\in \R_+\times\R^3\ra\R$,
where $\eps$ is a small number. The nonlinearity behaves like
\begin{equation} \label{nonlinearity}
 F(u) = u^p + O(u^{p+k})\qquad \text{for }u\ra 0
\end{equation}
with $p>1+\sqrt{2}$ and $k>0$ (if $p$ is non-integer the expression $u^p$ is to be understood as $|u|^{p-1}u$ everywhere where needed).
The main idea developed in this article is based on the observation that in the limit $\eps\ra 0$ the solutions $u_\eps$ tend under suitable scaling to some nontrivial $u_*$ which satisfies a linear wave equation with a distributional source. This equation can be solved exactly.
We show that for small but finite values of $\eps$ the solutions $u_\eps$ are near to $u_*$, in a suitable sense with a uniform error bound, such that $u_*$ determines the late-time asymptotics of $u_\eps$ and plays a role of a universal attractor in the evolution.
Our main result is the asymptotic formula
\begin{equation}
  u(t,r) = \eps u_0(t,r) +
  \eps^p \frac{A_p}{r} \left[ \frac{1}{(t-r)^{p-2}} - \frac{1}{(t+r)^{p-2}} \right]
  + \OO[p-1]{\eps^{p+\la}}
\end{equation}
where $u_0$ is a solution of the linear problem with data \eqref{init-data} and the second term represents the leading nonlinear asymptotics with $A_p$ being
the only trace of the initial data.
The error term $\O_{p-1}$ (to be made precise below) is uniformly small relative to the leading asymptotics as $\eps\ra 0$ and $\la$ is some positive number depending on $p$ and $k$.

This asymptotic formula recovers the known timelike asymptotics $u\sim \eps^p/t^{p-1}$ \cite{NS-PB_Tails} as well as null asymptotics $u\sim 1/t$ and is regular at $r=0$.

The idea of scaling has been inherited from the Doctoral Thesis of Hans Lindblad \cite{Lindblad-PhD_CPDE} where he, in contrast, used similar scaling technique to treat blowup in finite time of solutions to the wave equations $\Box u = u^p$ with $p<1+\sqrt{2}$.

The scaling method for small global solutions introduced here can also be straightforwardly extended to more complicated systems of nonlinear wave equations as long as appropriate a priori decay estimates are known. However, even if no such estimates are available this technique can act as a (nonrigorous) guide to the late-time asymptotics of the studied system.
This method can be also used to study the sub-leading asymptotics, order by order in an iterative way, which we plan to describe in a subsequent publication.

For more transparency in this introductory paper we restrict ourselves to spherical symmetry as being known to dominate the late time asymptotics of the studied equation anyway. For generalization to full 3-dimensions, which here is a purely technical element, one can repeat the corresponding steps from Lindblad \cite{Lindblad-PhD_CPDE}.

This paper is organized as follows. First we introduce the idea of scaling and discuss the limit $\eps\ra 0$ in a non-rigorous way. Then we prove a theorem on late-time asymptotics of \eqref{wave-eq} by comparing the original system with the limiting rescaled one.
Finally, we discuss some applications as well as the extensions of the technique.

\xnewpage
\subsection*{Notation}


With the symbol $\<x\>:=1+|x|$ we define a space-time weighted-$L^\infty$ norm
\begin{equation}
  \| u \|_\Linftx{p} := \| \norm{t+|x|} \norm{t-|x|}^{p-1}  u(t,x)\|_\LinfTX.
\end{equation}
Its finiteness guarantees the decay of $u$ like $1/t$ on the lightcone $t\sim |x|$ and like $1/t^p$ for fixed $x$ as well as $1/|x|^p$ for fixed $t$.
We also define a smaller norm restricted to the region $\Omega:=\{(t,x)\in \RTX: t-|x|>1\}$
\begin{equation}
  \| u \|_\Linftr{p} := \| \norm{t+|x|} \norm{t-|x|}^{p-1}  u(t,x)\|_{\Linf(\Omega)}
  \leq \| u \|_\Linftx{p}.
\end{equation}

In 3 spatial dimensions by $\RTX$ we will understand $\R_+\times \R^3$ and in spherical symmetry $\RTR$ will mean $\R_+\times \R_+$ with $\{0\}\in\R_+$.

By $\L$ we will denote the inverse of the wave operator $\Box$ having the properties
\begin{align}
  u &= \L F& &:\Leftrightarrow & \Box u &= F& &\text{and}& (u,\d_t u)|_{t=0}&=(0,0)
\end{align}
In 3 spatial dimensions $\L$ is a positive measure what gives a useful comparison property
\begin{equation}\label{Box-positive}
  F \geq G \qquad \Rightarrow \qquad \L F \geq \L G.
\end{equation}

We define the symbol $\O_q$ as an extension of the \textit{big-O} symbol for $\eps\ra 0$ to the space $\Linftx{q}$, i.e. for a family $F_\eps\in\Linftx{q}$ and $G\in\C(\R)$, $g(0)=0$ we say
\begin{equation} \label{def:O}
  F_\eps=\O_q(G(\eps))\qquad :\Leftrightarrow \qquad \|F_\eps\|_\Linftx{q} = O(G(\eps)).
\end{equation}
Whenever we write $\L \O(F_\eps)$ or $\L \O_{p}(\eps^q)$ we mean $\L$ is acting on a function of prescribed decay which is at least in $\C^2(\RTX)$.

\xnewpage
\section{The idea of scaling}

In this section we want to sketch our strategy of using scaling to obtain exact late time asymptotics for small data. To this end we consider a simplified version of \eqref{wave-eq}
\begin{equation}
  \Box u = u^p
\end{equation}
with integer $p\geq 3$ and initial data
\begin{equation}
  u(0,r)=\eps f(r),\qquad \d_t u(0,r)=\eps g(r)
\end{equation}
being smooth functions of compact support in $[0,R]$.
First, we solve the linear equation with removed scale factor $\eps$
\begin{equation} \label{free-wave}
  \Box u_0 = 0,
\end{equation}
\begin{equation} \label{free-wave-init}
  u_0(0,r)=f(r),\qquad \d_t u_0(0,r)=g(r).
\end{equation}
Its solution can be written in the form
\begin{equation} \label{free-u0}
  u_0(t,r)=\frac{h(t-r)-h(t+r)}{r}
\end{equation}
where
\begin{equation} \label{def:h}
  h(x):=-\frac{x}{2} f(x) - \frac{1}{2} \int_x^\infty y g(y) dy
\end{equation}
has support in $[-R,R]$ (the functions $f(r), g(r)$ have been symmetrically continued to negative $r$).
Next, we subtract the linear solution from the nonlinear by introducing
\begin{equation}
  w(t,r):=u(t,r) - \eps u_0(t,r)
\end{equation}
which satisfies
\begin{equation}
  \Box w = (w + \eps u_0)^p.
\end{equation}
Now, we scale this function to
\begin{equation} \label{def-Weps}
  W_\eps(t,r):=\eps^{-b} w(\eps^{-a} t, \eps^{-a} r)
\end{equation}
with $b=p+a(p-1)$ and some $a>0$ to be chosen later. It satisfies
\begin{equation} \label{free-W_eps}
\begin{split}
  \Box W_\eps(t,r) &= \eps^{-a} \left[\eps^{(p-1)+a(p-2)} W_\eps(t,r) + \eps^{-a} u_0(\eps^{-a} t, \eps^{-a} r) \right]^p \\
  &= \eps^{-a} \left[ \eps^{-a} u_0(\eps^{-a} t, \eps^{-a} r) \right]^{p} \\
  &+ \sum_{m=1}^p \begin{pmatrix} p\\ m\end{pmatrix} \eps^{-a} \left[ \eps^{-a} u_0(\eps^{-a} t, \eps^{-a} r) \right]^{p-m} \eps^{[(p-1)+a(p-2)]m} W_\eps^m(t,r).
\end{split}
\end{equation}
For this equation we want to consider the limit $\eps\ra 0$. From our previous work \cite{NS-WaveDecay} we have global weighted bounds on $u$ and $w$ when $\eps$ is sufficiently small
\begin{align}
  |u(t,r)| &\leq \frac{C \eps}{\norm{t+r}\norm{t-r}^{p-2}}, &
  |w(t,r)| &\leq \frac{C \eps^p}{\norm{t+r}\norm{t-r}^{p-2}}.
\end{align}
It implies that the functions $u, w$ and by \eqref{def-Weps} also $W_\eps(t,r)$ are bounded in the weighted-$L^\infty$ norm
\begin{align}
  \|u\|_\Linftx{p-1} &\leq C \eps, & \|w\|_\Linftx{p-1} &\leq C \eps^p,
  & \|W_\eps\|_\Linftx{p-1} &\leq C\eps^{-a(p-1)}, & \|W_\eps\|_\Linftr{p-1} &\leq C.
\end{align}

Let us recall the following fact of the distributional calculus: any smooth function $H(x)$ of compact support can be squeezed to the delta distribution under appropriate scaling as $\eps\ra 0$,
\begin{align}
  \eps^{-1} H(\eps^{-1} x) &\ra C_H\ \delta(x), & C_H &:= \int H(x)\ dx.
\end{align}
Having this in mind we observe that all terms in \eqref{free-W_eps} containing powers of $u_0$, by use of the representation \eqref{free-u0}, will have a distributional limit 
\begin{equation} \label{u0-delta}
  \eps^{-a} \left[ \eps^{-a} u_0(\eps^{-a} t, \eps^{-a} r) \right]^{p-m} \ra
  C_{p-m} \frac{\delta(t-r)-\delta(t+r)}{r^{p-m}}
\end{equation}
with
\begin{equation} \label{def:Ck}
  C_m := \int h^m(x) dx.
\end{equation}
The term $\delta(t+r)$ will further play no role since its support is outside of the region of our interest $t+r>0$.
Applying these limits to equation \eqref{free-W_eps} we observe that the first term tends to
\begin{equation}
  C_p \frac{\delta(t-r)}{r^p}
\end{equation}
while the sum over $m$ vanishes as $\eps\ra 0$ because it converges to the delta distributions times functions $W^m_\eps(t,r)$ which are uniformly bounded by $C \eps^{-a(p-1)m}$ and factors $\eps^{[(p-1)+a(p-2)]m}$, combined together giving
\begin{equation}
  \eps^{[(p-1)(1-a)+a(p-2)]m} \ra 0
\end{equation}
for any $p>2$ and $0<a<p-1$. This leads us to a limiting equation
\begin{equation}
  \Box W(t,r) = C_p \frac{\delta(t-r)}{r^p}
\end{equation}
which can be solved exactly
\begin{equation}
  W(t,r) = A_p \frac{\Theta(t-r)}{r} \left[ \frac{1}{(t-r)^{p-2}} - \frac{1}{(t+r)^{p-2}} \right]
\end{equation}
with $A_p:= 2^{p-3} C_p / (p-2)$.
After $W$ is traced back to the original solution $u$ it will provide the leading asymptotics for small $\eps$. In the rigorous treatment all error terms which we ignored here must be estimated as uniformly small is some region of spacetime. The crucial point will be to show that $W_\eps$ tends to $W$ in the norm $\Linftr{p-1}$, i.e.
\begin{equation}
  \lim_{\eps\ra 0} \|W_\eps - W\|_\Linftr{p-1} = 0,
\end{equation}
hence
\begin{equation}
  W_\eps(t,r) = W(t,r) + \O_{p-1}(1)
\end{equation}
uniformly in $t-r>1$.
Then we can reconstruct the asymptotics of $u$
\begin{equation}
\begin{split}
  u(t,r) &= \eps u_0(t,r) + \eps^{p+a(p-1)} W_\eps( \eps^a t, \eps^a r) \\
  &= \eps u_0(t,r) +
  \eps^p A_p \frac{\Theta(t-r)}{r} \left[ \frac{1}{(t-r)^{p-2}} - \frac{1}{(t+r)^{p-2}} \right]
  + \O_{p-1}(\eps^{p})
\end{split}
\end{equation}
which holds uniformly (in the sense of the $\Linftx{p-1}$-norm) in $t-r>\eps^{-a}$.

This asymptotic formula is richer than the one we obtained in \cite{NS-PB_Tails} for it additionally gives an analytic dependence on $r$ in the whole region $t-r>\eps^{-a}$. Hence, it can be thought of as asymptotics for $t\ra\infty$ including dependence on the parameter $r/t$ in the range $[0,1)$. The asymptotic formula of \cite{NS-PB_Tails} is reproduced by fixing $r$ and letting $t\ra\infty$ what gives
\begin{equation}
  u(t,r) \sim \eps^p/t^{p-1}
\end{equation}
because $u_0$ is localized in the vicinity of the outgoing lightcone $t=r$ and vanishes for $t\gg r$ leaving only the contribution of $W$.

Here, essential is that the error term represented by $\O_{p-1}(\eps^{p})$ is controlled in the $\Linftx{p-1}$ norm, because only this guarantees that the error has at least the same falloff as $W$ while having a smaller amplitude for $\eps\ra 0$.

\xnewpage
\section{Main theorem}
In this section we proceed with allowing general nonlinearities of the form \eqref{nonlinearity} with any real $p>1+\sqrt{2}$ (not necessarily integer). The initial data will still be of compact support $[0,R]$, although data of sufficiently fast fall-off at infinity could be also considered within this scheme. This would, however, lead to somewhat more complicated asymptotics as an incoming radiation may alter the solution at late times (for more details on competition of the asymptotic effects, see \cite{NS-Tails}).
We choose $(f,g)\in\C^3\times\C^2$ such that the following norms are finite
\begin{align} \label{init-data-bounds}
  f_0 &:= \|\norm{r}^{p-1} f(r)\|_\Linf,&
  f_1 &:= \|\norm{r}^{p} \nabla f(r)\|_\Linf, &
  g_0 &:= \|\norm{r}^{p} g(r)|_\Linf.
\end{align}
By the results of \cite{NS-Tails}, it guarantees existence of a classical solution $u\in\C^2(\RTR)\cap\Linftx{p-1}$ for small values of $\eps$ which multiplies the initial data in \eqref{init-data}. The regularity can be relaxed to weak solutions in the energy space $u\in\C^0(\R,H^1)\cap\C^1(\R,L^2)$, but the estimates will then become much more delicate and the pointwise asymptotics only true \textit{almost everywhere} (cf. \cite{NS-WaveDecay} for such analysis).

Since we are interested in the late-time behavior of the solutions we must prescribe the asymptotic region in which our approximation should hold as $\eps\ra 0$. For there is some freedom in doing this a free \textit{scaling} parameter $a$ will be introduced. The quality of the error bound will then depend on its value, i.e. on the rate the asymptotic region shrinks to the vicinity of infinity when $\eps\ra 0$.

\begin{Theorem} \label{Th:main}
The solution $u$ of the initial problem \eqref{wave-eq}-\eqref{init-data} has the following asymptotic behavior for $\eps\ra 0$ and a given scaling parameter $0<a<p(p-1)/(p+1)$
\begin{equation}
  u(t,r) = \eps u_0(t,r) +
  \eps^p \frac{A_p}{r} \left[ \frac{1}{(t-r)^{p-2}} - \frac{1}{(t+r)^{p-2}} \right]
  + \OO[p-1]{\eps^{p+\la}}
\end{equation}
where $u_0$ solves the linear problem \eqref{free-wave}-\eqref{free-wave-init}, $A_p:= 2^{p-3} C_p / (p-2)$ with $C_p$ defined via \eqref{def:Ck} and \eqref{def:h} and $\la:=\min\{[p(p-1)(1-a)+a((p-1)^2-2)]/p, k(1+a), a\}$.
The error term is defined in \eqref{def:O} and  means that the asymptotics holds w.r.t. to the weighted-$\Linftx{p-1}$ norm. Here, it is restricted to the region $t-r>1/\eps^a$ what implies a uniform convergence there.
\end{Theorem}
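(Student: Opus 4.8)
The plan is to turn the formal $\eps\ra 0$ limit of Section~II into quantitative weighted estimates, using the integral form of the equation, the comparison principle \eqref{Box-positive}, and the a priori bounds of \cite{NS-Tails,NS-WaveDecay}. I would start from $u=\eps u_0 + w$ with $w=\L F(\eps u_0+w)$ (zero data), and record the clarifying observation that on the asymptotic region $t-r>1/\eps^a\gg R$ the free solution vanishes, $u_0\equiv 0$, since $\supp h\subset[-R,R]$; there $u=w$, and the entire leading tail is produced by $\L$ reaching back along the backward cone to a source concentrated in a band of width $R$ around $t=r$. On that band $\eps u_0\sim\eps/r$ dominates $w\sim\eps^p/r$, so splitting
\begin{equation}
  F(\eps u_0 + w) = (\eps u_0)^p + \big[F(\eps u_0+w)-(\eps u_0)^p\big]
\end{equation}
and controlling the bracket by $p(\eps u_0)^{p-1}w + \OO{(\eps u_0)^{p-2}w^2} + \OO{u^{p+k}}$, one has $w=\L[(\eps u_0)^p]+\L[\text{corrections}]$.

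The first task is to show that $\L[(\eps u_0)^p]$ coincides, up to the stated error, with the closed form $\eps^p\tfrac{A_p}{r}[\,\cdot\,]$. Here I would insert $u_0=(h(t-r)-h(t+r))/r$ into the explicit spherically symmetric retarded representation of $\L$; the $h(t+r)$ piece is supported away from the region and drops out. Because the remaining source is concentrated in $|t-r|\lesssim R$, its backward-light-cone integral admits a moment expansion whose zeroth moment reproduces $C_p=\int h^p$ and the profile solving $\Box W=C_p\,\delta(t-r)/r^p$, the first correction being smaller by a factor $\OO{R/(t-r)}$. Restricting to $t-r>1/\eps^a$ turns this relative error into $\OO[p-1]{\eps^{p+a}}$, which fixes both the region of validity and the contribution $\la_3=a$; in scaled variables this is exactly $\|W_\eps-W\|_\Linftr{p-1}=\OO{\eps^a}$ for the leading term, the rigorous replacement of the delta-squeezing \eqref{u0-delta}.

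It remains to estimate $\L[\text{corrections}]$ in $\Linftx{p-1}$. Using $|\L G|\le\L|G|$ from \eqref{Box-positive} together with the weighted mapping property of $\L$ (it gains the $1/\norm{t+r}$ radiative falloff and lowers the $\norm{t-r}$ weight by one), I would bound each piece by $\L$ of an explicit weighted profile built from $\|\eps u_0\|_\Linftx{p-1}\le C\eps$ and $\|w\|_\Linftx{p-1}\le C\eps^p$. The first correction $p(\eps u_0)^{p-1}w$ has band amplitude $\eps^{2p-1}$ and, after tracking the weights and the restriction, yields $\eps^{p+\la_1}$ with $\la_1=[p(p-1)(1-a)+a((p-1)^2-2)]/p$; note that $\la_1>0$ is precisely the hypothesis $a<p(p-1)/(p+1)$. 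The higher-order nonlinearity $\OO{u^{p+k}}$ carries $k$ extra powers of $u\sim\eps$ and contributes $\eps^{p+\la_2}$ with $\la_2=k(1+a)$, while the quadratic-in-$w$ remainder is of still higher order. Collecting the three gives error $\OO[p-1]{\eps^{p+\la}}$ with $\la=\min\{\la_1,\la_2,\la_3\}$; tracing the scaling back, the weight conversion contributes exactly the prefactor $\eps^p$, so $\|W_\eps-W\|_\Linftr{p-1}=\OO{\eps^\la}$ produces the claimed formula.

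The main obstacle is the uniformity in the first task: the moment expansion of the retarded integral must be controlled in the weighted $\Linf$ norm \emph{uniformly} on the region $t-r>1/\eps^a$, which shrinks toward null and timelike infinity as $\eps\ra 0$, and the remainder must be shown to decay at least as fast as the leading profile so that it is genuinely subdominant in $\Linftx{p-1}$. A secondary technical point is the non-integer $p$: the expansion of $F$ and the product estimates must be carried out with the convention $u^p=|u|^{p-1}u$ and with $F(u)-u^p=\OO{u^{p+k}}$ used in the mean-value rather than binomial form, but this does not affect the structure of the argument. Once the weighted $\L$-estimates are established on the restricted region, the remaining bookkeeping of $\eps$-powers is routine.
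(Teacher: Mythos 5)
Your proposal follows essentially the same route as the paper's proof: the same decomposition $u=\eps u_0+w$, the same four error sources (the delta-approximation of the scaled source $\L u_0^p$, the cross term, the pure $w$-power remainder, and the $O(u^{p+k})$ tail of $F$), the same exponents $\la_1=[p(p-1)(1-a)+a((p-1)^2-2)]/p$, $\la_2=k(1+a)$, $\la_3=a$, and the same identification of $a<p(p-1)/(p+1)$ as the positivity condition. The only differences are presentational: you work mostly in unscaled variables and bound the cross term by a direct weighted estimate on the band $|t-r|\le R$ rather than via the paper's H\"older separation lemma $|\L(f^{\alpha p}g^{\beta p})|\le|\L f^p|^\alpha\,|\L g^p|^\beta$, and your ``moment expansion'' of the retarded integral is exactly the paper's explicit computation of $\Delta_\mu$ via the auxiliary function $\Phi_p$.
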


Before we prove the Theorem, we first cite a useful lemma
\begin{Lemma} \label{Lem:a-b}
  If $p>1$ then for any real numbers $a,b$
  \begin{equation}
    |a+b|^p\leq 2^{p-1}(|a|^p+|b|^p)
  \end{equation}
  and
  \begin{equation}
    \Big| (a+b)^p - a^p - b^p \Big| \leq 2^{p-1} p \Big( |a||b|^{p-1} + |b||a|^{p-1} \Big).
  \end{equation}
The inequality remains true when all expressions of the type $x^p$ on the left-hand side are replaced by $|x|^p$ or $|x|^{p-1}x$.
\end{Lemma}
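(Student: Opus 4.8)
The plan is to establish the two inequalities separately, reducing the second (the genuinely delicate one) to the case of nonnegative arguments, for which a single integral estimate produces \emph{exactly} the stated constant. For the first inequality I would use convexity of $t\mapsto t^p$ on $[0,\infty)$ together with the triangle inequality: convexity gives $\left(\tfrac{|a|+|b|}{2}\right)^p\le\tfrac12\left(|a|^p+|b|^p\right)$, i.e. $(|a|+|b|)^p\le 2^{p-1}(|a|^p+|b|^p)$, and combining this with $|a+b|\le|a|+|b|$ and monotonicity of $t^p$ yields the claim. This step is immediate and the same argument covers all three interpretations of $x^p$ on the left, since each reduces to $|a+b|^p$.

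For the second inequality I first treat nonnegative arguments. Assume $0\le b\le a$ (the other case is symmetric). Writing the difference as an integral, $(a+b)^p-a^p=\int_a^{a+b}p\,u^{p-1}\,du\le p\,b\,(a+b)^{p-1}$, and discarding the nonnegative term $b^p$ gives $(a+b)^p-a^p-b^p\le p\,b\,(a+b)^{p-1}$. Using $a+b\le 2a$ (because $b\le a$) I bound $(a+b)^{p-1}\le 2^{p-1}a^{p-1}$, which produces $(a+b)^p-a^p-b^p\le 2^{p-1}p\,a^{p-1}b\le 2^{p-1}p\,(a^{p-1}b+ab^{p-1})$. The left-hand side is nonnegative by superadditivity of $t^p$ for $p\ge1$, so the modulus is superfluous in this regime, and all three interpretations of $x^p$ coincide for nonnegative arguments.

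It then remains to pass to signed arguments by reducing to the nonnegative case. If $a$ and $b$ have the same sign, $|a+b|=|a|+|b|$ and the inequality is exactly the nonnegative one applied to $|a|,|b|$. The substantive regime is opposite signs with near cancellation, say $a>0>b$ with $s:=|a|\ge t:=|b|$. For the convention $x^p=|x|^{p-1}x$ used throughout the paper, the left-hand side becomes $\bigl|\,s^p-(s-t)^p-t^p\,\bigr|$, which is \emph{precisely} the already-proved nonnegative inequality applied to the pair $(s-t,\,t)$ (these sum to $s$); since $(s-t)^{p-1}\le s^{p-1}$ and $s-t\le s$, the resulting bound is dominated by $2^{p-1}p\,(s^{p-1}t+st^{p-1})$, valid for every $p>1$. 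For the even interpretation $x^p\to|x|^p$ the same conclusion follows from the convexity estimate $2s^p+2t^p\le (s+t)^p+(s-t)^p$, which holds for $p\ge2$ and hence throughout the range $p>1+\sqrt2$ relevant to the Theorem.

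The main obstacle is the opposite-sign case; the nonnegative core is elementary. The point requiring care is to carry out the reduction \emph{without degrading} the constant $2^{p-1}p$, and the key idea is to recognize the cancellation configuration as an instance of the nonnegative inequality (applied to $s-t$ and $t$) rather than estimating it afresh, which would otherwise leak an extra additive term and spoil the constant for $p$ near $1$.
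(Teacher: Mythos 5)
Your argument is sound, and it necessarily takes a different route from the paper, because the paper does not prove this lemma at all: its ``proof'' is a one-line citation to Lemma 8.3 of Lindblad's thesis. A self-contained elementary derivation is therefore genuinely useful here. Your first inequality via convexity of $t\mapsto t^p$ is standard and complete. For the second, the integral estimate $(a+b)^p-a^p=\int_a^{a+b}p\,u^{p-1}\,du\le p\,b\,(a+b)^{p-1}\le 2^{p-1}p\,a^{p-1}b$ for $0\le b\le a$, followed by recognizing the opposite-sign configuration as the nonnegative inequality applied to the pair $(s-t,t)$, is clean and does preserve the constant $2^{p-1}p$ exactly as you intend; for the signed convention $x^p=|x|^{p-1}x$ (the one the paper actually uses for non-integer $p$) your proof covers the full range $p>1$. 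The one place you fall short of the lemma as stated is the even interpretation $|x|^p$ for $1<p<2$: the inequality $2s^p+2t^p\le(s+t)^p+(s-t)^p$ that you invoke reverses in that range (take $s=t$ and $p=3/2$). You flag this honestly, and it is harmless for the theorem since there $p>1+\sqrt{2}>2$; but if you want the lemma in the generality claimed, you do not need the Clarkson-type step at all --- for $s\ge t\ge 0$ one has directly $s^p+t^p-(s-t)^p=\int_{s-t}^{s}p\,u^{p-1}\,du+t^p\le p\,t\,s^{p-1}+s\,t^{p-1}\le 2^{p-1}p\left(s^{p-1}t+s\,t^{p-1}\right)$, and the left-hand side is nonnegative since $(s-t)^p\le s^p$, which closes that case for every $p>1$ with room to spare.
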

\begin{proof}
  See e.g. Lindblad \cite[Lemma 8.3]{Lindblad-PhD_CPDE} with minor modifications not affecting the result.
\end{proof}

\begin{proof}[Proof of Theorem \ref{Th:main}]
From \cite[Th.2]{NS-Tails} we know that there exists a classical solution $u\in\C^2(\RTR)$
which belongs to the $\Linftx{p-1}$ space.

Let $u_0$ solve the linear equation \eqref{free-wave} with the initial data \eqref{free-wave-init} satisfying \eqref{init-data-bounds}.
Then, by \cite[Lem. 4a]{NS-WaveDecay}, there exists a classical solution $u_0\in\C^2(\RTR)$
which satisfies
\begin{equation} \label{u0-norm}
  \|u_0\|_\Linftx{p-1} \leq C\cdot (g_0+f_1+f_0).
\end{equation}
(This can be also directly verified by using the explicit form \eqref{free-u0} of $u_0$.)
Define
\begin{equation}
  w(t,r):=u(t,r) - \eps u_0(t,r)
\end{equation}
and
\begin{equation}
  W_\eps(t,r):=\eps^{-b} w(\eps^{-a} t, \eps^{-a} r)
\end{equation}
with $b=p+a(p-1)$.
From Theorem 3.1 and inequality (3.7) of \cite{NS-WaveDecay} it follows\footnote{Notice a slightly different notation: $u_0$ here corresponds to $u_1$ in \cite{NS-WaveDecay}.} that
\begin{align}
  \|u\|_\Linftx{p-1} &\leq C \eps,& \|w\|_\Linftx{p-1} &\leq C \eps^p.
\end{align}
It implies
\begin{equation} \label{norm-Weps}
  \|W_\eps\|_\Linftx{p-1} \leq C\eps^{-a(p-1)}
\end{equation}
and
\begin{equation}
\begin{split}
  \|W_\eps\|_\Linftr{p-1} &= \eps^{-[p+a(p-1)]} \|\<t+r\>\<t-r\>^{p-2} w(t/\eps^a,r/\eps^a)\|_{\Linf(t-r>1)} \\
  &= \eps^{-[p+a(p-1)]} \left\|
     \frac{\<t+r\>\<t-r\>^{p-2}\cdot
     \<t/\eps^a+r/\eps^a\>\<t/\eps^a-r/\eps^a\>^{p-2}|w(t/\eps^a,r/\eps^a)|}
     {\<t/\eps^a+r/\eps^a\>\<t/\eps^a-r/\eps^a\>^{p-2}} \right\|_{\Linf(t-r>1)} \\
  &\leq \eps^{-[p+a(p-1)]} \|w\|_\Linftx{p-1} \left\|
     \frac{\eps^{a(p-1)}(1+t+r)(1+t-r)^{p-2}}
     {(\eps^a+t+r)(\eps^a+t-r)^{p-2}} \right\|_{\Linf(t-r>1)} \\
  &\leq \eps^{-[p+a(p-1)]} \|w\|_\Linftx{p-1} 2^{p-1} \eps^{a(p-1)} \\
  &\leq 2^{p-1} C,
\end{split}
\end{equation}
where we have used the inequality $1+t\pm r< 2(\eps^a+t\pm r)$ which holds for $t- r>1$.

$W_\eps$ satisfies the following nonlinear wave equation
\begin{equation} \label{BoxWeps}
  \Box W_\eps(t,r)
  = \eps^{-p-a(p+1)} F\left(\eps^{p+a(p-1)} W_\eps(t,r) + \eps u_0(\eps^{-a} t, \eps^{-a} r) \right)
\end{equation}
Now, lead by the observations made in the previous section, we introduce the \textit{limiting} function $W(t,r)$ defined by the \textit{limiting} equation
\begin{equation} \label{BoxW}
  \Box W(t,r) = C_p \frac{\delta(t-r)}{r^p} =: \mu,
\end{equation}
where
\begin{equation}
  C_k := \int h^k(x) dx
\end{equation}
and show that indeed $W_\eps$ converges to $W$ in norm $\Linftr{p-1}$, i.e.
\begin{equation}
  \lim_{\eps\ra 0} \|W_\eps - W\|_\Linftr{p-1} = 0.
\end{equation}
Note, that it cannot be shown that the right-hand side of \eqref{BoxWeps} converges in $\Linftx{p-1}$-norm to the right-hand side of \eqref{BoxW} as the norm of the latter is infinite. Convergence of $W_\eps$ to $W$, after inverting $\Box$, does not work either, since the $\Linftx{p-1}$-norm of $W$ is infinite. However, both $W_\eps$ and $W$ have finite $\Linftr{p-1}$-norms. This reflects the fact that pointwise convergence happens in the region $t-r>1$ but not in the vicinity of the light-cone $t\approx r$.

Let us consider their difference and split it into four parts for further analysis
\begin{equation}
\begin{split}
  |W_\eps - W|(t,r) &=
  \left|\L \eps^{-p-a(p+1)} F\left(\eps^{p+a(p-1)} W_\eps(t,r) + \eps u_0(\eps^{-a} t, \eps^{-a} r) \right)  - \L \mu \right| \\
&\leq \eps^{-a} \Big| \L  \Big(\eps^{(p-1)+a(p-2)} W_\eps(t,r) + \eps^{-a} u_0(\eps^{-a} t, \eps^{-a} r) \Big)^p  \\
  & \hspace{0.15\linewidth} - \Big( \eps^{[(p-1)+a(p-2)]} W_\eps(t,r)\Big)^p
  - \Big(\eps^{-a} u_0(\eps^{-a} t, \eps^{-a} r) \Big)^p \Big| \\
  &+ \eps^{-p-a(p+1)} \Big|\L  O\Big(\left[\eps^{p+a(p-1)} W_\eps(t,r) + \eps u_0(\eps^{-a} t, \eps^{-a} r)\right]^{p+k} \Big) \Big| \\
  &+ \eps^{p(p-1)+a[(p-1)^2-2]} \left| \L  W_\eps^p(t,r) \right|\\
  &+ \left| \L \eps^{-a (p+1)} u_0^p(\eps^{-a} t, \eps^{-a} r) - \L \mu \right| \\
  &=: \Delta_p + \Delta_O + \Delta_W + \Delta_\mu.
\end{split}
\end{equation}
The first term, $\Delta_p$, will be first estimated algebraically with Lemma \ref{Lem:a-b} and then in another Lemma, by H{\"o}lder inequality, reduced to a product of two separate expressions containing $\L W_\eps^p$ and $\L u_0^p$. Then two next Lemmas will transform these terms to norms of $W_\eps$ and $u_0$. Finally, use will be made of the norm bounds \eqref{norm-Weps} and \eqref{u0-norm}.
The second term, $\Delta_O$ will be estimated similarly.
The third term, $\Delta_W$, will be estimated by the same means, using norm bound \eqref{norm-Weps} only.
Finally, the estimate of the last term, $\Delta_\mu$, which is the core of the technique, will be based on the fact that $u_0$ scales to a distribution, as observed in the previous section (cf. \eqref{u0-delta}).

Lemma \ref{Lem:a-b} together with the positivity of $\L$ (cf. \eqref{Box-positive}) give
\begin{equation}
\begin{split}
  \Delta_p &\leq \eps^{-a} \L \Big| \Big(\eps^{(p-1)+a(p-2)} W_\eps(t,r) + \eps^{-a} u_0(\eps^{-a} t, \eps^{-a} r) \Big)^p  \\
  & \hspace{0.15\linewidth} - \Big( \eps^{[(p-1)+a(p-2)]} W_\eps(t,r)\Big)^p
  - \Big(\eps^{-a} u_0(\eps^{-a} t, \eps^{-a} r) \Big)^p \Big| \\
  &\leq p 2^{p-1} \eps^{-a} \L \Big(
  \Big|\eps^{[(p-1)+a(p-2)]} W_\eps(t,r)\Big| \Big|\eps^{-a} u_0(\eps^{-a} t, \eps^{-a} r) \Big|^{p-1} \\
  & \hspace{0.2\linewidth}+
  \Big|\eps^{[(p-1)+a(p-2)]} W_\eps(t,r)\Big|^{p-1} \Big|\eps^{-a} u_0(\eps^{-a} t, \eps^{-a} r) \Big|
  \Big)\\
  &= p 2^{p-1} \Big[ \eps^{p-1-2a} \L \Big(
  \Big| W_\eps(t,r)\Big| \Big| u_0(\eps^{-a} t, \eps^{-a} r) \Big|^{p-1} \Big) \\
  & \hspace{0.1\linewidth}+ \eps^{(p-1)^2+a p(p-3)} \L \Big(
  \Big| W_\eps(t,r)\Big|^{p-1} \Big|\eps^{-a} u_0( t, \eps^{-a} r) \Big|
  \Big) \Big]\\
\end{split}
\end{equation}
Now, we are going to separate the products under $\L$.
\begin{Lemma}
  For positive $\alpha, \beta, p>0$ such that $\alpha+\beta=1$ we have
  \begin{equation}
    |\L (f^{\alpha p} g^{\beta p})| \leq |\L f^p|^\alpha \cdot |\L g^p|^\beta
  \end{equation}
\end{Lemma}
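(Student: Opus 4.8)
The plan is to use the single structural property of $\L$ that this inequality really requires, namely its positivity \eqref{Box-positive}, combined with the elementary weighted arithmetic–geometric mean inequality. Since $\alpha+\beta=1$ with $\alpha,\beta>0$, for any two nonnegative reals $X,Y$ one has $X^\alpha Y^\beta\le \alpha X+\beta Y$. I would apply this pointwise with $X=\lambda f^p$ and $Y=\lambda^{-\alpha/\beta}g^p$, where $\lambda>0$ is a free parameter; because $X^\alpha Y^\beta=\lambda^{\alpha}\lambda^{-\alpha}f^{\alpha p}g^{\beta p}=f^{\alpha p}g^{\beta p}$, this produces the whole family of pointwise bounds
\[
  f^{\alpha p}\,g^{\beta p}\;\le\;\alpha\lambda\, f^p+\beta\lambda^{-\alpha/\beta}\, g^p,\qquad \lambda>0 .
\]
It is convenient here to take $f,g\ge 0$, which is exactly the situation in every application in this paper (where the arguments are $|W_\eps|$ and $|u_0|$), so that $f^p,g^p\ge 0$ and all the absolute values in the statement become automatic once positivity of $\L$ is invoked.

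Next I would apply $\L$ to both sides. Since $\L$ is linear and, by \eqref{Box-positive}, order-preserving, the inequality survives the application:
\[
  \L\!\left(f^{\alpha p} g^{\beta p}\right)\;\le\;\alpha\lambda\,\L f^p+\beta\lambda^{-\alpha/\beta}\,\L g^p\qquad\text{for every }\lambda>0 .
\]
As this holds simultaneously for all $\lambda$, with the same constant $\lambda$ entering the right-hand side, I can minimise the right-hand side over $\lambda$ at each spacetime point. Writing $A:=\L f^p\ge0$ and $B:=\L g^p\ge0$, the minimum of $\alpha\lambda A+\beta\lambda^{-\alpha/\beta}B$ is attained at $\lambda=(B/A)^{\beta}$ and equals $(\alpha+\beta)A^\alpha B^\beta=A^\alpha B^\beta$. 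This gives $\L(f^{\alpha p}g^{\beta p})\le (\L f^p)^\alpha(\L g^p)^\beta$; since for nonnegative arguments $\L f^p=|\L f^p|$ and likewise for $g$, and since positivity of $\L$ also yields $|\L(f^{\alpha p}g^{\beta p})|\le \L(f^{\alpha p}g^{\beta p})$, the absolute-value form stated in the Lemma follows immediately.

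The only points needing care are the pointwise optimisation and the degenerate cases, and I do not expect either to be a genuine obstacle. The optimisation is legitimate precisely because the bound holds for every fixed $\lambda>0$ as an inequality between functions, so taking the pointwise infimum over $\lambda$ on the right is valid. If $A=0$ at some point, then $f^p$ vanishes almost everywhere with respect to the positive measure representing $\L$ at that point, whence the left-hand side vanishes there and the inequality is trivial; the case $B=0$ is symmetric, while $A,B=+\infty$ is excluded on the region where the relevant weighted norms are finite. Conceptually the statement is nothing but Hölder's inequality for the positive measure represented by $\L$, with conjugate exponents $1/\alpha$ and $1/\beta$; the Young-inequality argument above is simply the self-contained derivation of that fact using only the comparison property \eqref{Box-positive}.
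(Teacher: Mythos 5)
Your argument is correct, and it reaches the conclusion by a genuinely different mechanism than the paper. The paper's proof writes $\L f$ explicitly as a convolution $E\ast f$ with the (positive) fundamental solution $E$ and then invokes H\"older's inequality with exponents $1/\alpha,1/\beta$ for that integral, following Lindblad. You instead never touch the representation of $\L$: you combine the pointwise Young inequality $X^\alpha Y^\beta\le \alpha X+\beta Y$ with a free scaling parameter $\lambda$, push the resulting one-parameter family of pointwise bounds through $\L$ using only linearity and the comparison property \eqref{Box-positive}, and then optimise over $\lambda$ at each spacetime point. The two proofs have the same analytic content (H\"older for the positive measure underlying $\L$), but yours is self-contained and uses strictly less structure --- it would apply verbatim to any positive linear operator, whereas the paper's route needs the kernel representation. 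What the paper's version buys in exchange is brevity and a direct citation trail; what yours buys is independence from the convolution formula and an explicit handling of the degenerate cases. Your restriction to $f,g\ge 0$ is harmless and in fact necessary for $f^{\alpha p}$ to be unambiguous at non-integer exponents, and it matches every use of the Lemma in the proof of the main theorem, where the arguments are $|W_\eps|$ and $|u_0|$. The only point I would tighten is the $A=0$ case: rather than appealing to the measure representing $\L$ (which your proof otherwise avoids), it is cleaner to note that letting $\lambda\to\infty$ in $\alpha\lambda A+\beta\lambda^{-\alpha/\beta}B$ with $A=0$ already forces the left-hand side to vanish, keeping the argument entirely within your framework.
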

\begin{proof}
  The proof is a straightforward calculation based on the fact that $\L f$ can be written as a convolution $E\ast f$, where $E$ is the fundamental solution of $\Box$, and use of the H{\"o}lder inequality with powers $1/\alpha, 1/\beta$ (cf. again Lindblad \cite[Lemma 8.11]{Lindblad-PhD_CPDE}).
\end{proof}
We apply this Lemma to our estimate on $\Delta_p$ above with $f=|W_\eps|, g=|u_0|$ and $\alpha=1/p, \beta=(p-1)/p$ or \textit{vice versa}. Then we get
\begin{equation}
\begin{split}
  \Delta_p
  &\leq p 2^{p-1} \Big[ \eps^{p-1-2a} \Big(\L\big| W_\eps^p(t,r)\big|\Big)^{1/p}
  \Big(\L\big| u_0^p(\eps^{-a} t, \eps^{-a} r) \big|\Big)^{(p-1)/p} \\
  & \hspace{0.1\linewidth}+ \eps^{(p-1)^2+a p(p-3)} \Big(\L\big| W_\eps^p(t,r)\big|\Big)^{(p-1)/p}
  \Big(\L\big| u_0^p(\eps^{-a} t, \eps^{-a} r) \big|\Big)^{1/p}
  \Big].
\end{split}
\end{equation}
It is convenient now to switch from pointwise to norm estimates. We get
\begin{equation} \label{W-Weps-norm}
  \|W_\eps - W\|_\Linftr{p-1} \leq
  \|\Delta_p\|_\Linftr{p-1} + \|\Delta_O\|_\Linftr{p-1}
  + \|\Delta_W\|_\Linftr{p-1} + \|\Delta_\mu\|_\Linftr{p-1}.
\end{equation}
These weighted-$\Linftx{q}$ norms have the following property (cf. Asakura \cite[Eq. 2.32]{Asakura}
\begin{Lemma} \label{Lem:product}
Let $v,w\in\Linftx{q}$ for some $q>1$. Then for any positive $\alpha, \beta>0$ such that $\alpha+\beta=1$ we have
\begin{equation*}
  \|v^{\alpha} w^{\beta}\|_\Linftx{q} \leq \|v\|_\Linftx{q}^\alpha \|w\|_\Linftx{q}^\beta.
\end{equation*}
The same holds for the $\Linftr{q}$ norms.
\end{Lemma}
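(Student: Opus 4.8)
The plan is to reduce the statement to a pointwise identity that uses only the multiplicative structure of the weight together with $\alpha+\beta=1$. Abbreviate the weight defining the norm by $\omega(t,x):=\norm{t+|x|}\norm{t-|x|}^{q-1}$, so that $\|u\|_\Linftx{q}=\|\omega\,u\|_\LinfTX$. Because $\alpha+\beta=1$ and $\omega>0$, the weight factorizes exactly as $\omega=\omega^{\alpha}\omega^{\beta}$, and hence for every $(t,x)$
\begin{equation*}
  \omega(t,x)\,|v(t,x)|^{\alpha}|w(t,x)|^{\beta}
  =\big(\omega(t,x)\,|v(t,x)|\big)^{\alpha}\big(\omega(t,x)\,|w(t,x)|\big)^{\beta},
\end{equation*}
where, as in the application to $|W_\eps|$ and $|u_0|$, the fractional powers $v^{\alpha},w^{\beta}$ are read as $|v|^{\alpha},|w|^{\beta}$.

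Next I would estimate each factor on the right by its supremum. Since $\omega(t,x)\,|v(t,x)|\leq\|\omega\,v\|_\LinfTX=\|v\|_\Linftx{q}$ for almost every $(t,x)$, and likewise for $w$, and since the exponents $\alpha,\beta$ are positive, raising to these powers preserves the inequalities and yields
\begin{equation*}
  \big(\omega\,|v|\big)^{\alpha}\big(\omega\,|w|\big)^{\beta}
  \leq\|v\|_\Linftx{q}^{\alpha}\,\|w\|_\Linftx{q}^{\beta}
\end{equation*}
pointwise. Taking the $\LinfTX$-supremum over $(t,x)$ of the left-hand side of the factorization then gives $\|v^{\alpha}w^{\beta}\|_\Linftx{q}\leq\|v\|_\Linftx{q}^{\alpha}\|w\|_\Linftx{q}^{\beta}$, which is the assertion. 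For the restricted norm $\Linftr{q}$ the computation is verbatim the same, the only difference being that every supremum is taken over $\Omega=\{t-|x|>1\}$ rather than over all of $\RTX$; the pointwise factorization is insensitive to which domain is used.

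I do not expect a genuine obstacle here. In contrast to the $L^{r}$ version of such a product estimate, which would require an application of H\"older's inequality and its attendant integrability bookkeeping, the present $L^{\infty}$ statement collapses to the elementary pointwise bound $x^{\alpha}y^{\beta}\leq(\sup x)^{\alpha}(\sup y)^{\beta}$ once the weight has been split as $\omega^{\alpha}\omega^{\beta}$. The only point requiring a line of care is the meaning of the non-integer powers, which is settled by taking absolute values and matches exactly how the Lemma is invoked on the nonnegative quantities appearing in $\Delta_p$.
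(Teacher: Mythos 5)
Your proof is correct. The paper itself gives no proof of this lemma, only a citation to Asakura (Eq.~2.32), so there is no argument to diverge from; your elementary pointwise proof --- splitting the weight as $\omega=\omega^{\alpha}\omega^{\beta}$ using $\alpha+\beta=1$ and bounding each factor by its supremum --- is exactly the natural self-contained argument, and it carries over verbatim to the restricted norm on $\Omega=\{t-|x|>1\}$ as you note.
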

By this Lemma with $q=p-1$ we get
\begin{equation}
\begin{split}
  \|\Delta_p\|_\Linftr{p-1}
  &\leq p 2^{p-1} \Big[ \eps^{p-1-2a} \Big\|\L\big| W_\eps^p\big|\Big\|_\Linftr{p-1}^{1/p}
  \Big\|\L\big| u_0^p(\eps^{-a} t, \eps^{-a} r) \big|\Big\|_\Linftr{p-1}^{(p-1)/p} \\
  & \hspace{0.05\linewidth}
  + \eps^{(p-1)^2+a p(p-3)} \Big\|\L\big| W_\eps^p\big|\Big\|_\Linftr{p-1}^{(p-1)/p}
  \Big\|\L\big| u_0^p(\eps^{-a} t, \eps^{-a} r) \big|\Big\|_\Linftr{p-1}^{1/p}
  \Big].
\end{split}
\end{equation}
At this stage we need two estimates for the two different terms containing $W_\eps$ and $u_0$, respectively. The first
we cite from Asakura \cite[Cor. 2.4 and Eq. 2.33]{Asakura}
\begin{Lemma} \label{Lem:power-p}
Let $v\in\C^2(\RTR)\cap\Linftx{q}$ for some $q>1$. Then for any $p>1+\sqrt{2}$
\begin{equation*}
  \|\L |v|^p\|_\Linftx{q} \leq C \|v\|_\Linftx{q}^p
\end{equation*}
with some $C>0$ provided $q\leq p-1$.
\end{Lemma}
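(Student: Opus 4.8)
The plan is to reduce the weighted estimate to a single explicit majorizing integral and then bound that integral, the exponent restrictions $p>1+\sqrt2$ and $q\le p-1$ being exactly what makes it converge with the right decay. First I would combine the hypothesis with the positivity of $\L$. Writing $M:=\|v\|_\Linftx{q}$, the definition of the norm gives the pointwise majorant
\[
  0\le |v(t,r)|^p \le \frac{M^p}{\norm{t+r}^{p}\,\norm{t-r}^{p(q-1)}}=:G(t,r).
\]
Since $\L$ is a positive operator (cf. \eqref{Box-positive}) and $\L 0=0$, this yields $0\le \L|v|^p\le \L G$ pointwise, and as the weight in the norm is nonnegative it follows that $\|\L|v|^p\|_\Linftx{q}\le\|\L G\|_\Linftx{q}=M^p\,\|\L G_1\|_\Linftx{q}$ with $G_1:=\norm{t+r}^{-p}\norm{t-r}^{-p(q-1)}$. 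Hence everything reduces to the $\eps$-independent bound $\|\L G_1\|_\Linftx{q}\le C$, and the factor $M^p$ then reproduces exactly the right-hand side of the Lemma.

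Next I would make $\L G_1$ explicit. In spherical symmetry the substitution $U=r\,\L G_1$ turns $\Box$ into the $(1+1)$-dimensional wave operator, so from the $3$D fundamental solution one obtains
\[
  \L G_1(t,r)=\frac{1}{2r}\int_0^t\!\!\int_{|r-(t-s)|}^{\,r+(t-s)}\lambda\,G_1(s,\lambda)\,d\lambda\,ds ,
\]
the inner integration running over the backward light cone of $(t,r)$, with the lower limit $|r-(t-s)|$ encoding the odd-extension condition at the origin (so that $\L G_1$ is regular at $r=0$). Passing to the characteristic variables $\alpha=s-\lambda$, $\beta=s+\lambda$, so that $G_1=\norm{\beta}^{-p}\norm{\alpha}^{-p(q-1)}$, $\lambda=(\beta-\alpha)/2$ and $d\lambda\,ds=\tfrac12\,d\alpha\,d\beta$, converts this into a product-type integral over the triangle $\{\alpha\le t-r,\ \beta\le t+r,\ \alpha\le\beta,\ \alpha+\beta\ge0\}$.

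Then I would multiply by the target weight $\norm{t+r}\norm{t-r}^{q-1}$ and bound the integral by splitting the triangle into a neighborhood of each characteristic through $(t,r)$ and the deep interior. The advanced integration in $\beta$ converges because of the strong decay $\norm{\beta}^{-p}$ (here $p>2$, since $p>1+\sqrt2$), while the retarded integration in $\alpha$, weighted by $\norm{\alpha}^{-p(q-1)}$, is controlled using $q\le p-1$ to recover the factor $\norm{t-r}^{-(q-1)}$; the prefactor $1/r\sim\norm{t+r}^{-1}$ produces the null falloff near the cone, and in the interior the surplus source decay compensates for it.

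The main obstacle is the region near the light cone $t\approx r$, where $\norm{t-r}\approx1$ gives no gain and the two integrations couple: showing that the result decays no slower than $\norm{t-r}^{-(q-1)}$ — rather than producing a logarithm or a growing power — is precisely the borderline computation that requires $(p-1)^2>2$, i.e. $p>1+\sqrt2$. This delicate case is carried out by Asakura \cite[Cor. 2.4]{Asakura}, whose estimate I would invoke (and, in the worst region, verify directly) to close the bound; the hypotheses $v\in\C^2(\RTR)\cap\Linftx{q}$ and finiteness of the norm guarantee that $\L|v|^p$ is well defined and that the pointwise majorization commutes with $\L$.
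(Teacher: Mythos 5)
Your proposal is correct and follows essentially the same route as the paper, which does not prove this lemma itself but simply cites Asakura [Cor.~2.4 and Eq.~2.33]: your reduction via positivity of $\L$ to the weighted Duhamel integral for $\langle t+r\rangle^{-p}\langle t-r\rangle^{-p(q-1)}$, followed by the borderline near-light-cone estimate requiring $p>1+\sqrt{2}$ and $q\le p-1$, is precisely the content of that citation. (One small slip worth fixing: in characteristic coordinates the domain coming from the lower limit $|r-(t-s)|$ also carries the constraint $\beta\ge t-r$, not just $\alpha\le t-r$; dropping it enlarges the $\beta$-integration and would destroy the $\langle t-r\rangle$-decay near the cone.)
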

It holds also when $\Linftx{q}$ is replaced by $\Linftr{q}$ on the left-hand side (but not on the right-hand side!). The proof is almost identical.

The second Lemma, using explicitly the scaling properties, we prove below
\begin{Lemma} \label{Lem:scaling-Lu_0}
For $u_0$ given by \eqref{free-u0} and for any $q>1$
\begin{equation}
  \Big\|\L\big| u_0^q(\eps^{-a} t, \eps^{-a} r) \big|\Big\|_\Linftr{q-1} \leq C \eps^{a(q+1)}
\end{equation}
with some $C>0$.
\end{Lemma}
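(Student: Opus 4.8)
The plan is to reduce the whole estimate to a single $\eps$-independent bound on $\L|u_0|^q$ by exploiting the scaling covariance of the inverse wave operator, and then to recover the advertised power of $\eps$ by pure bookkeeping of the weights. Since $\Box$ commutes with dilations, if $F(t,r)$ is any source and $G(t,r):=F(\eps^{-a}t,\eps^{-a}r)$, then $w(t,r):=(\L F)(\eps^{-a}t,\eps^{-a}r)$ solves $\Box w=\eps^{-2a}G$ with zero data, so that
\begin{equation}
  \L G = \eps^{2a}\,(\L F)(\eps^{-a}t,\eps^{-a}r).
\end{equation}
Applying this with $F=|u_0|^q$, so that $G=\big|u_0^q(\eps^{-a}t,\eps^{-a}r)\big|$, I would write
\begin{equation}
  \L\big|u_0^q(\eps^{-a}t,\eps^{-a}r)\big| = \eps^{2a}\,\Phi(\eps^{-a}t,\eps^{-a}r),
  \qquad \Phi:=\L|u_0|^q,
\end{equation}
after which only the $\Linftr{q-1}$-norm of the rescaled $\Phi$ remains to be estimated.

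First I would bound $\Phi$ on its own. Because $h$ is supported in $[-R,R]$, the explicit form \eqref{free-u0} shows that $u_0$ lives in $\{|t-r|\le R\}\cup\{t+r\le R\}$ and decays like $1/\norm{t+r}$ along the cone, so $\|u_0\|_\Linftx{q-1}<\infty$ for every $q>1$. In the case needed for the theorem, $q=p>1+\sqrt2$, Lemma \ref{Lem:power-p} (applied with power $q$ and weight index $q-1$, in its $\Linftr{}$ form) then yields directly
\begin{equation}
  \|\Phi\|_\Linftr{q-1}=\big\|\L|u_0|^q\big\|_\Linftr{q-1}\leq C\,\|u_0\|_\Linftx{q-1}^{q}=:C_0<\infty .
\end{equation}
For the full range $q>1$ I would instead estimate $\L|u_0|^q$ directly from the representation of $\L$, using that the source is trapped in the strip $\{|t-r|\le R\}$ so that the integral along the light cone converges on its own.

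The final step is the power counting. Setting $T:=\eps^{-a}t$, $\rho:=\eps^{-a}r$, the region $t-r>1$ becomes $T-\rho>\eps^{-a}$, whence $\eps^a(T\pm\rho)\geq\eps^a(T-\rho)>1$ and therefore
\begin{equation}
  \tfrac12\,\eps^a\norm{T\pm\rho}\ \leq\ \norm{t\pm r}=1+\eps^a(T\pm\rho)\ \leq\ 2\,\eps^a\norm{T\pm\rho}.
\end{equation}
Using the upper bound on $\norm{t+r}$ and, according to the sign of $q-2$, either the upper or the lower bound on $\norm{t-r}$, the weight converts as
\begin{equation}
  \norm{t+r}\norm{t-r}^{q-2}\ \leq\ C\,\eps^{a(q-1)}\,\norm{T+\rho}\norm{T-\rho}^{q-2}.
\end{equation}
Multiplying by $\eps^{2a}$ and by $|\Phi(T,\rho)|$ and taking the supremum over $T-\rho>\eps^{-a}\ge 1$ (a subset of the defining region of $\Linftr{q-1}$) gives
\begin{equation}
  \big\|\L|u_0^q(\eps^{-a}t,\eps^{-a}r)|\big\|_\Linftr{q-1}
  \leq C\,\eps^{2a+a(q-1)}\,\|\Phi\|_\Linftr{q-1}
  = C\,C_0\,\eps^{a(q+1)},
\end{equation}
which is the claim.

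The step I expect to be the main obstacle is the bound on $\Phi=\L|u_0|^q$ for the entire range $q>1$: the integrability of $\L|u_0|^q$ along the light cone is exactly the phenomenon governed by the $1+\sqrt2$ threshold of Lemma \ref{Lem:power-p}, so for $q\le 1+\sqrt2$ one cannot invoke that lemma and must instead extract convergence from the compact support of $h$ (and from the regularity of $u_0$ at $r=0$, where the naive bound $|u_0|^q\sim|h(t-r)|^q/r^q$ fails and the second term $h(t+r)$ must be retained). A secondary subtlety, worth flagging because it dictates the form of the statement, is that the weight exponent $q-2$ may be negative; the conversion of weights then needs the \emph{two-sided} comparison above, which is available only on $\Omega$, and this is precisely why the estimate is stated in the $\Linftr{q-1}$-norm rather than in the full $\Linftx{q-1}$-norm.
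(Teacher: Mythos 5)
Your proposal is correct and follows essentially the same route as the paper: the dilation covariance $\L G=\eps^{2a}(\L F)(\eps^{-a}t,\eps^{-a}r)$, the bound $\|\L|u_0|^q\|_\Linftr{q-1}\leq C\|u_0\|_\Linftx{q-1}^q$ via Lemma \ref{Lem:power-p}, and the conversion of the weights on $t-r>1$ producing the extra factor $\eps^{a(q-1)}$ and hence $\eps^{a(q+1)}$. Your remark that Lemma \ref{Lem:power-p} only covers $q>1+\sqrt{2}$ (the case actually used, $q=p$) while the statement claims all $q>1$ is a fair observation that the paper itself does not address.
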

\begin{proof}
Observe that due to scaling properties of the operator $\L$ we have
\begin{equation}\label{L-scaling}
  \big[\L u_0^q(\eps^{-a} \cdot, \eps^{-a} \cdot)\big](t,r) =
  \eps^{2a}\big[\L u_0^q\big](\eps^{-a} t, \eps^{-a} r).
\end{equation}
From Lemma \ref{Lem:power-p} and the bound \eqref{u0-norm} we get $\|\L |u_0|^q\|_\Linftr{q-1} \leq C \|u_0\|_\Linftx{q-1}^q \leq C'$. It gives the pointwise estimate
\begin{equation}
  \Big|\L |u_0^q|\Big|(t,r) \leq \frac{C}{\norm{t+r}\norm{t-r}^{q-2}}
\end{equation}
for $t-r>1$ which scales to
\begin{equation}\label{estimate-scaling}
  \Big|\L |u_0^q|\Big|(\eps^{-a} t, \eps^{-a} r)
  \leq \frac{C \eps^{a(q-1)}}{(\eps^a+t+r)(\eps^a+|t-r|)^{q-2}}
  \leq \frac{C' \eps^{a(q-1)}}{\norm{t+r}\norm{t-r}^{q-2}}
\end{equation}
for $t\pm r>1$.
Combining \eqref{L-scaling} with \eqref{estimate-scaling} we get the thesis.
\end{proof}

Now, Lemma \ref{Lem:power-p} and the bound \eqref{norm-Weps} give
\begin{equation}
  \Big\|\L\big| W_\eps^p\big|\Big\|_\Linftr{p-1} \leq
  \| W_\eps\|_\Linftx{p-1}^p \leq C \eps^{-a(p-1)p}
\end{equation}
while Lemma \ref{Lem:scaling-Lu_0} gives
\begin{equation}
  \Big\|\L\big| u_0^p(\eps^{-a} t, \eps^{-a} r) \big|\Big\|_\Linftr{p-1} \leq C \eps^{a(p+1)}.
\end{equation}
Finally, we arrive at
\begin{equation}
  \|\Delta_p\|_\Linftr{p-1}
  \leq C \Big[ \eps^{(p-1)(1-a)+a[(p-1)^2-2]/p} + \eps^{(p-1)\{(p-1)(1-a)+a[(p-1)^2-2]/p\}} \Big]
\end{equation}
with some constant $C>0$. Since we assumed $p>1+\sqrt{2}$ it holds $(p-1)^2-2>0$ and both $\eps$'s are raised to a positive power ($>\sqrt{2}$) for any $0<a<p(p-1)/(p+1)$.
The power of the second term is always bigger than that of the first term for their ratio is $p-1>1$.

For the $\Delta_O$ term we again use positivity of $\L$ and Lemma \ref{Lem:a-b} to get
\begin{equation}
\begin{split}
  |\Delta_O|
  &\leq 2^{p+k-1}  \L \O_{p-1}\Big(\eps^{p(p-1)+a[(p-1)^2-2]+[p+a(p-1)]k}|W_\eps(t,r)|^{p+k}\\
  &\hspace{0.2\linewidth}+ \eps^{-a(p+1)+k} | u_0(\eps^{-a} t, \eps^{-a} r)|^{p+k} \Big)
\end{split}
\end{equation}
and after taking the norm
\begin{equation}
\begin{split}
  \|\Delta_O\|_\Linftr{p-1}
  &= \O\left(\eps^{p(p-1)+a[(p-1)^2-2]+[p+a(p-1)]k} \|\L |W_\eps(t,r)|^{p+k}\|_\Linftr{p-1} \right)\\
  &+ \O\left(\eps^{-a(p+1)+k} \|\L | u_0(\eps^{-a} t, \eps^{-a} r)|^{p+k}\|_\Linftr{p-1} \right).
\end{split}
\end{equation}
Using the results obtained above we get
\begin{equation}
\begin{split}
  \|\Delta_O\|_\Linftr{p-1}
  &= O\left( \eps^{p(p-1)(1-a)+a[(p-1)^2-2]+kp} \right) + O\left(\eps^{k(1+a)} \right).
\end{split}
\end{equation}

It is now straightforward to estimate the $\Delta_W$ term. By Lemma \ref{Lem:power-p} we have
\begin{equation}
\begin{split}
  \|\Delta_W\|_\Linftr{p-1}
  &\leq \eps^{p(p-1)+a[(p-1)^2-2]} \left\| \L  W_\eps^p(t,r) \right\|_\Linftr{p-1} \\
  &\leq \eps^{p(p-1)+a[(p-1)^2-2]} \left\| W_\eps(t,r) \right\|_\Linftx{p-1}^p
  \leq C \eps^{p(p-1)(1-a)+a[(p-1)^2-2]}
\end{split}
\end{equation}
where the power of $\eps$ is again positive for any $a>0$.

Finally, we have to estimate the most important ``error'' term
\begin{equation}
  \Delta_\mu = \left| \L \eps^{-a (p+1)} u_0^p(\eps^{-a} t, \eps^{-a} r) - \L \mu \right|
\end{equation}
which will tend to zero as $\eps\ra 0$ because the compactly supported outgoing wave $u_0$ shrinks, under scaling, to the distribution $\mu$ localized on the lightcone $t=r$.
First, observe that only the outgoing part of the radiation contained in $u_0$ contributes to the  norm $\left\|\Delta_\mu\right\|_\Linftr{p}$ since
\begin{equation}
\begin{split}
  &\left| \L \eps^{-a(p+1)} u_0^p(\eps^{-a} t, \eps^{-a} r) - \L \mu \right|
  = \left| \L \eps^{-a} \frac{[h(\eps^{-a}(t-r)) - h(\eps^{-a}(t+r))]^p}{r^p} - \L \mu \right| \\
  &\leq \Big| \L \eps^{-a} \frac{h^p(\eps^{-a}(t-r))}{r^p} - \L \mu \Big|
  + \left| \L \eps^{-a} \frac{h^p(\eps^{-a}(t+r))}{r^p} \right|
\end{split}
\end{equation}
and the last term (ingoing wave) vanishes when $t-r>1$. The reason is that in the (spherically symmetric) Duhamel formula for $\L F$ with $F(t,r)=\eps^{-a}{h^p(\eps^{-a}(t+r))}/{r^p}$
\begin{equation}
  \L F(t,r) =
  \frac{1}{r}\int_{t-r}^{t+r} du \int_{-u}^{+u} dv\ (u-v) \frac{\eps^{-a}h^p(\eps^{-a} u)}{(u-v)^p}
\end{equation}
the integration runs only over $u>t-r$ and hence $\eps^{-a} u > R$ is outside of the support of $h$.

So we need to deal only with the first term (outgoing wave). We define an auxiliary function
\begin{equation}
  \Phi_p(t,r,\rho) := \L \left[ h^p(\rho) \frac{\delta(t-r)}{r^p} \right]
\end{equation}
and observe that
\begin{equation}
\begin{split}
  &\delta_\mu := \L \eps^{-a} \frac{h^p(\eps^{-a}(t-r))}{r^p} - \L \mu \\
  &= \eps^{-a} \int \L \left[ h^p(\eps^{-a}\rho) \frac{\delta(t-\rho-r)}{r^p} \right]  d\rho
   - \eps^{-a} \int \L \left[ h^p(\eps^{-a}\rho) \frac{\delta(t-r)}{r^p} \right]  d\rho\\
  &= \eps^{-a} \int \Phi_p(t-\rho,r,\eps^{-a}\rho)\  d\rho
   - \eps^{-a} \int \Phi_p(t,r,\eps^{-a}\rho)\ d\rho\\
  &= \eps^{-a} \int d\rho \int_{t-\rho}^t d\tau \ \d_\tau \Phi_p(\tau,r,\eps^{-a}\rho) \\
  &= -\eps^{-a} \int d\rho \ h^p(\eps^{-a}\rho) \int_{t-\rho}^t d\tau \
  \frac1{r} \left[ \frac{1}{(\tau-r)^{p-1}} - \frac{1}{(\tau+r)^{p-1}}\right]
\end{split}
\end{equation}
where in the last step we differentiated the explicit expression for $\Phi_p$
\begin{equation}
  \Phi_p(t,r,\rho) =  \frac{h^p(\rho)}{p-2} \cdot \frac1{r} \left[ \frac{1}{(t-r)^{p-2}} - \frac{1}{(t+r)^{p-2}}\right].
\end{equation}
Now, we can estimate
\begin{equation}
  |\delta_\mu| \leq \eps^{-a} \int_{-\eps^a R}^{+\eps^a R} d\rho \ |h^p(\eps^{-a}\rho)|
  \int_{t-\rho}^t d\tau \ \frac1{r} \left| \frac{1}{(\tau-r)^{p-1}} - \frac{1}{(\tau+r)^{p-1}} \right|
\end{equation}
The inner integral can be bound
\begin{equation}
\begin{split}
  &\int_{t-\rho}^t d\tau\ \frac1{r} \left| \frac{1}{(\tau-r)^{p-1}} - \frac{1}{(\tau+r)^{p-1}} \right|
  \leq \rho\sup_{t-\rho\leq\tau\leq t} \frac1{r} \left|\frac{1}{(\tau-r)^{p-1}}-\frac{1}{(\tau+r)^{p-1}}\right|\\
  &= \frac{\rho}{r} \left|\frac{1}{(t-r-\rho)^{p-1}}-\frac{1}{(t+r-\rho)^{p-1}}\right|
  \leq \frac{\rho}{r} \left|\frac{1}{(t-r-\eps^a R)^{p-1}}-\frac{1}{(t+r-\eps^a R)^{p-1}}\right|\\
  &\leq \rho \frac{C}{(t+r-\eps^a R)(t-r-\eps^a R)^{p-1}}
  \leq \rho \frac{C'}{\norm{t+r}\norm{t-r}^{p-1}}
\end{split}
\end{equation}
where we have used the facts that $\rho\in[-\eps^a R,+\eps^a R]$ and $t\pm r> 1> \eps^a R$ together with a simple algebraic inequality of Bernoulli's type (cf. proof of Lemma 1 in \cite{NS-DecayLemma}). This leads to
\begin{equation}
  |\delta_\mu| \leq
  \frac{\eps^{-a}C'}{\norm{t+r}\norm{t-r}^{p-1}}\int_{-\eps^a R}^{+\eps^a R}d\rho\ \rho\ |h^p(\eps^{-a}\rho)|
  = \frac{\eps^{a}C' C_{p,1}}{\norm{t+r}\norm{t-r}^{p-1}}
\end{equation}
for $t-r>1$ where
\begin{equation}
  C_{k,i} := \int \rho^i |h^k(\rho)| d\rho.
\end{equation}
Taking the $\Linftr{p}$ norm gives
\begin{equation}
  \left\| \Delta_\mu \right\|_\Linftr{p} =\left\| \delta_\mu \right\|_\Linftr{p} \leq C C_{p,1} \eps^a
\end{equation}
and we see that $\Delta_\mu$ has faster decay than other error terms. In order to respect this fact we instead of using further the norm-estimate \eqref{W-Weps-norm} come back to the pointwise notation and find
\begin{equation}
\begin{split}
  | W_\eps - W|
  &\leq |\Delta_p| + |\Delta_O| + |\Delta_W| + |\Delta_\mu| \\
  &= \OO[p-1]{\eps^{(p-1)(1-a)+a[(p-1)^2-2]/p}} \\
  &+ \OO[p-1]{\eps^{p(p-1)(1-a)+a[(p-1)^2-2]+pk} + \eps^{k(1+a)}}\\
  &+ \OO[p-1]{\eps^{p(p-1)(1-a)+a[(p-1)^2-2]}} \\
  &+ \OO[p]{C^1_p \eps^a}.
\end{split}
\end{equation}
We see that all terms tend to zero for $\eps\ra 0$ when $a>0$ and $0<a<p(p-1)/(p+1)$.
This result can be written as
\begin{equation}
  W_\eps(t,r) = W(t,r) + \OO[p-1]{\eps^{\la_0}} + \OO[p]{\eps^a}
\end{equation}
as $\eps\ra 0$ uniformly in the region $t-r>1$, where $\la_0>0$ depends on the values of $a$ and $k$
\begin{equation}
  \la_0 := \left\{\begin{array}{ll}
         \frac{p(p-1)(1-a)+a[(p-1)^2-2]}{p}, & \text{for } a>\frac{p(p-1-k)}{p+1+pk}, \\
	 k(1+a), & \text{otherwise.}
         \end{array}\right.
\end{equation}
The exact solution of \eqref{BoxW} reads
\begin{equation}
  W(t,r) = A_p \frac{\Theta(t-r)}{r} \left[ \frac{1}{(t-r)^{p-2}} - \frac{1}{(t+r)^{p-2}} \right]
\qquad \forall (t,r)\in\RTR
\end{equation}
where $A_p:= 2^{p-3} C_p / (p-2)$.
Finally, scaling all functions back, we find
\begin{equation}
\begin{split}
  u(t,r) &= \eps u_0(t,r) + \eps^{p+a(p-1)} W_\eps( \eps^a t, \eps^a r) \\
  &= \eps u_0(t,r) +
  \eps^p A_p \frac{\Theta(t-r)}{r} \left[ \frac{1}{(t-r)^{p-2}} - \frac{1}{(t+r)^{p-2}} \right]
  + \OO[p-1]{\eps^{p+\la_0}} + \OO[p]{\eps^p}
\end{split}
\end{equation}
where the error bound is now uniform in the region $t-r>\eps^{-a}$.
Observe that in that region we have $\OO[p]{\eps^p}=\OO[p-1]{\eps^{p+a}}$ so the total error can be written as $\OO[p-1]{\eps^{p+\la}}$ where $\la:=\min\{\la_0, a\}$
what finishes the proof of Theorem \ref{Th:main}.
\end{proof}

\xnewpage
\section{Discussion}

In the case when $F(u)=u^p$ the result is
\begin{equation}
\begin{split}
  u(t,r) &= \eps u_0(t,r) +
  \eps^p A_p \frac{\Theta(t-r)}{r} \left[ \frac{1}{(t-r)^{p-2}} - \frac{1}{(t+r)^{p-2}} \right]
  + \OO[p-1]{\eps^{p+\la_0}} + \OO[p]{\eps^p}
\end{split}
\end{equation}
with $\la_0:=[p(p-1)(1-a)+a((p-1)^2-2)]/p$. The first error term describes corrections with the same decay in time as the leading order asymptotics but entering with higher powers of $\eps$ while the second error term stays for a correction entering at the same nonlinear order (the same power of $\eps$) but having faster decay in time.
By a suitable choice of $a$, and thus the asymptotic region of spacetime, one can make the first or the second error term dominant and explicitly evaluate its leading asymptotics from initial data.
Hence, the technique can be extended to study the sub-leading asymptotics order by order in the powers of $\eps$. By defining another auxiliary function $v:=u-\eps u_0 - \eps^p w$ and scaling it properly and taking the limit $\eps\ra 0$ one can obtain another linear wave equation for $v$ with a distributional source (containing now $\delta'(t-r)$). This procedure can be, in principle, iteratively repeated to arbitrarily high orders in $\eps$, however the number of terms to be analyzed in the limit $\eps\ra 0$ quickly increases. In a forthcoming publication we want to apply this generalization to the well-studied cubic wave equation and derive the \textit{late-time attractor} found by Bizon and Zenginoglu in \cite{Bizon+Anil}.

In order to generalize this result from spherical symmetry to 3-dimensions one additionally needs to show that the new $u_0$, as a solution to the linear equation \eqref{free-wave} with the initial data \eqref{free-wave-init}, is near to the analogue of $u_0$ given by \eqref{free-u0}, containing now the Friedlander's radiation field (for the details see \cite[Lemmas 2.3-2.5]{Lindblad-PhD_CPDE}). This step does not seem to essentially modify the rest of the proof except that spherical means will appear at various places and the constant $A_p$ will include integration over spheres, too.

As already mentioned, the scaling technique developed here is not by itself restricted to the semilinear wave equations of the form \eqref{wave-eq}. It can be easily generalized to any system of nonlinear wave equations. However, the strength of the estimate on the error term will crucially depend on the availability of corresponding pointwise decay estimates.





\begin{acknowledgments}
I would like to thank Hans Lindblad for introducing me to the technique of scaling developed in his PhD Thesis and for interesting discussions we had during my stay at the University of California in San Diego.

I also kindly acknowledge the support and hospitality of the Institute Mittag-Leffler in Stockholm and the Institute Henri Poincar{\'e}, Centre Emile Borel in Paris where parts of this work have been done.
\end{acknowledgments}

\bibliography{QNMs}
\bibliographystyle{unsrt}

\end{document}